\title{Stable Invitations}
\author{Hooyeon Lee and Yoav Shoham\\$ $ \\
}
\newcommand{\ASIP}{{\normalfont \textsf{ASIP}}} 
\newcommand{\GASP}{{\normalfont \textsf{GASP}}} 
\newcommand{\GSIP}{{\normalfont \textsf{GSIP}}}
\newcommand{\SPK}{{\normalfont \textsf{SPK}}} 
\newcommand{\INC}{{\normalfont \textsf{INC}}} 
\newcommand{\DEC}{{\normalfont \textsf{DEC}}}
\theoremstyle{plain}
\newtheorem*{Thm*}{Theorem}
\newtheorem*{Lemma*}{Lemma}
\newtheorem{Thm}{Theorem}
\newtheorem{Lemma}{Lemma}
\theoremstyle{definition}
\declaretheorem[style=definition,name=Example]{Exmp}
\declaretheorem[style=definition,name=Definition]{Def}
\begin{document}


\begin{abstract}
We consider the situation in which an organizer is trying to convene
an event, and needs to choose a subset of agents to be invited.
Agents have preferences over how many attendees should be at the event
and possibly also who the attendees should be.
This induces a stability requirement: All invited agents should prefer
attending to not attending, and all the other agents should not regret
being not invited.
The organizer's objective is to find the invitation of maximum size
subject to the stability requirement.
We investigate the computational complexity of finding the maximum
stable invitation when all agents are truthful, as well as the
mechanism design problem when agents may strategically misreport their
preferences.
\end{abstract}


\section{Introduction}  \label{sec:Intro}
Imagine an event organizer trying to convene an event -- for example,
a fundraiser.
Let us assume for now (as we will do in most of the paper) that the
time and venue for the event are fixed,
and that the only remaining decision for the organizer to make is whom
to invite among a set of agents.
An \emph{invitation} is simply defined to be a subset of agents.
The goal of the organizer is to find maximum invitations (for example,
in order to maximize the total donations), but the potential invitees have 
their own preferences over how many attendees there should be at the event 
and possibly also who the potential attendees should be.
For example, a given donor may not want to attend if few attendees 
show up, but she may also want the event not to be overly crowded. 
Another donor may want to attend the event only
if her friends attend and her business competitor does not.

We first consider agents with anonymous preferences over invitations
-- agents only care about how many attendees are at the event (but not
the identities of attendees).
An invitation is \emph{stable} if all invitees prefer
attending to not attending and if no uninvited person wishes she had
been invited. Stability is obviously desirable, but a stable
invitation may not exist in general. This naturally raises the
question of how hard it is to determine whether it exists for a
given setting, and if it does, what the maximum stable invitation is.
These questions take an extra meaning in the
strategic case, in which agents may misreport their preferences. Can
the organizer incentivize the agents to disclose their true preferences? 
We call this problem the \emph{Anonymous Stable Invitation Problem} (\ASIP). 
If we assume truthful agents, we have an algorithm design problem, and
we report positive results in this case. If we assume strategic
agents, we have a mechanism design problem, and
we report an impossibility result in general as well as positive
results for a special case of the problem.

We then relax the assumption of anonymous preferences and allow agents
to specify their constraints over the identities of attendees, 
in addition to their preferences over the number of attendees. 
Formally, we allow each agent to specify her acceptance set of agents
and rejection set of agents such that she is willing to attend the
event only if everyone in her acceptance set also attends, no one
in her rejection set attends, and the number of attendees is acceptable to her.
As before, an invitation is \emph{stable} if all invitees prefer
attending to not attending and if no uninvited person wishes she had
been invited.
We again ask the following natural questions: How hard it is to
determine whether a stable invitation exists for a given instance,
and if it exists whether one can efficiently find the maximum stable invitation. 
We call this problem the \emph{General Stable Invitation Problem} (\GSIP). 
In the non-strategic case we show that one can efficiently find a maximum
stable invitation in some cases, but in general the problem is
NP-hard. In the strategic case, an impossibility for the anonymous
case implies the same impossibility for the general case. But we
also show an impossibility result for the case in which agents only
care about the identities of attendees (and not about the number of
attendees).

The rest of the paper is organized as follows. In Section~\ref{sec:Related} we discuss previous work in the literature. In Section~\ref{sec:ASIP} we formally define the Anonymous Stable Invitation Problem, and investigate both the non-strategic and strategic cases.  In Section~\ref{sec:GSIP} we relax the assumption of anonymous preferences and formally define the General Stable Invitation Problem. We then investigate the non-strategic and strategic cases of the general problem. In Section~\ref{sec:Discussion} we discuss the contribution of this work and directions for future work.

\section{Related Work} \label{sec:Related}


The problems we consider in this work can be viewed as a group scheduling 
problem in which the goal is to find an agreeable outcome for a group of 
agents, subject to certain constraints and objectives. 
Ephrati et al.~\cite{ephrati1994non} tackle incentive issues with strategic 
agents in a group scheduling setting, and propose three monetary-based meeting systems.
Their systems are extensions of the Vickrey-Clarke-Groves mechanism, 
in which invitees can bid their preferences using monetary ``points."
They assume that a central host has an access to the calendars of invitees, 
but in this work we do not make this assumption. In addition we do not consider 
monetary-based mechanisms in this work.

Another closely related work is done by Lu and 
Boutilier~\cite{GroupPurchasing}. Lu and Boutilier consider the problem of 
group purchasing, in which the sellers provide volume discounts for a large 
group of buyers and the objective is to find a (Nash) stable matching between 
the buyers and sellers that maximizes social welfare. Although the settings 
are quite different, it is worth noting that the problems we discuss in this 
work can be generalized to stable matching problems (we discuss such 
generalization in Section~\ref{sec:Discussion}). To differentiate from the work 
of Lu and Boutilier, we emphasize that we consider both truthful and strategic 
agents and we do not consider coalitions in this work, while they assumed 
truthful agents and considered coalitions among agents.

The most closely related work of which we are aware is done by Darmann 
et al.~\cite{GASP2012WINE}. 
There the authors consider the \emph{Group Activity Selection Problem} (\GASP), 
in which the objective is to find an assignment of agents to activities where 
agents are assumed to have anonymous preferences over activities as well as 
number-of-participants.  We remark that the 
\emph{Anonymous Stable Invitation Problem} (\ASIP\ in Section~\ref{sec:ASIP}) 
can be viewed as a sub-class of \GASP\ with a single activity.\footnote{In particular, our easiness result shown in Section~\ref{sec:ASIP-algorithm} can be derived from the work of Darmann et al. 
However we remark that other easiness and hardness results of ours are not 
directly implied by the work of Darmann et al. }
Yet there are several differences between our work and the work by 
Darmann et al.  First, our main results are anchored in the 
\emph{General Stable Invitation Problem} (\GSIP\ in Section~\ref{sec:GSIP}) 
where agents no longer have anonymous preferences.  Second, while Darmann 
et al.\ provides various hardness results for \GASP, they do not directly 
imply similar hardness results for \GSIP. Finally, in both \ASIP\ and \GSIP, 
we consider strategic agents and strategy-proof mechanisms, while Darmann 
et al.\ only considers non-strategic (truthful) agents for \GASP. 

We also remark that the Stable Invitations Problems (both \ASIP\ and \GSIP) 
are closely related to hedonic games.\footnote{Much work has been devoted to analyzing optimality and stability of various hedonic coalition structures (see \cite{BogomolnaiaJackson}, \cite{DrezeGreenberg}, and \cite{AzizBrandl}). In this work we take a solution concept of stability for granted and focus on computational and game-theoretic complexity of finding a solution. 
}
Ballester~\cite{NP_hedonic} provides a number of computational complexity results (in fact, hardness results) 
for finding a core-stable, Nash-stable, or individually rational outcome 
in hedonic games and anonymous hedonic games.  These hardness results do not 
imply similar hardness results for \GSIP\ for two reasons. First, the hardness 
results on anonymous hedonic games do not hold for \GSIP\ because \GSIP\ is not 
equivalent to anonymous hedonic games. Second, while an instance of \GSIP\ can 
be transformed into a (non-anonymous) hedonic game in a na\"{i}ve manner by 
listing all possible coalitions, this transformation blows up the size of an 
input instance representation exponentially.\footnote{In contrast, any 
instance of \ASIP\ can be transformed into a (non-anonymous) hedonic game, 
and it is described concisely in the work of Darmann et al. However, since we 
provide an easiness result for \ASIP\ in this work, Bellester's hardness 
results are not applicable in \ASIP. }  
Therefore we remark that our easiness and hardness results discussed 
in this work are not a direct derivative of the work by Darmann et al. or Ballester.

\section{Anonymous Stable Invitation Problem} \label{sec:ASIP}

In this section we formally define the \emph{Anonymous Stable Invitation Problem} (\ASIP), and investigate the non-strategic case and the strategic case of \ASIP. 

\subsection{Definitions and Notation}  \label{sec:asip_def}

\begin{Def} \label{def:ASIP}
An instance of the \emph{Anonymous Stable Invitation Problem} (\ASIP) is a pair $(N, P)$ where $N = \{a_1, a_2, \dots, a_n\}$ is a set of $n$ agents and $P$ is an $n$-tuple of preferences of agents where $P = (P_1, P_2, \dots, P_n)$.  For each agent $a_i$, we define $P_i$ to be a total preorder ($\succeq_i$) on the set of outcomes, $X = \{0, 1, 2, \dots, n\}$ where an outcome $x\in (X\setminus\{0\})$ denotes the number of attendees and $x = 0$ denotes the outside option of not attending. 
For any $x_1, x_2\in (X \setminus\{0\})$, $x_1 \succeq_i x_2$ is interpreted as agent $a_i$ weakly preferring attending the event if $x_1$ attendees are present (including herself) to attending if $x_2$ attendees are present (including herself).
We use $\succ_i$ and $\sim_i$ to denote the induced strict preferences and indifference relations, respectively. 
We drop the subscript ($i$) if it is clear from the context.
\end{Def}
We assume that for each agent $a_i$ and each $x\in (X\setminus\{0\})$, either $x \succ_i 0$ or $0 \succ_i x$. In words, no agent is indifferent between her outside option of not attending and any other outcome. This assumption is made for convenience and does not change our technical results. 

If the organizer na\"{\i}vely invites everyone with the goal of maximizing attendance, some invitees might prefer not to attend -- this leads to individual rationality. On the other hand, if the organizer leaves out some agents with the hope of pleasing the rest, this might upset some of those not invited -- this leads to envy-freeness.  We now formally define stable invitations. 
\begin{Def} \label{def:invitation}
An \emph{invitation} $S$ for an instance $(N, P)$ is a subset of $N$, and is interpreted as the organizer inviting the agents in $S$. 
An invitation $S$ is said to be \emph{individually rational} if for every agent $a_i\in S$ it holds that $|S| \succ_i 0$. 
An invitation $S$ is said to be \emph{envy-free}\footnote{Another notion of envy can be defined as an agent who is not invited but desires the spot of another agent who is invited. We do not consider this notion in this work, but it is an interesting direction for future work to see how our technical results would change. } if for every agent $a_i\not\in S$ it holds that $|S \cup \{a_i\}| = (|S| + 1) \prec_i 0$. 
An invitation is \emph{stable} if it is both individually rational and envy-free.
\end{Def}

For each agent $a_i$, we can naturally induce from $P_i$ her preference over the set of all invitations because the preference between any two invitations is induced by the cardinality of each invitation.  We overload our notation ($\sim_i, \succeq_i, \succ_i$) for the induced preferences over invitations. Formally, for any invitation $S$ with $a_i\not\in S$, we define that $S \sim_i 0$ because $a_i$ is not invited.  Between any invitation $S$ with $a_i\in S$ and another invitation $S'$ with $a_i \in S'$, the preference between $S$ and $S'$ is induced by the preference between $|S|$ and $|S'|$ (be it $\sim_i, \succeq_i, \succ_i$). The preference between $S$ with $a_i\in S$ and $S'$ with $a_i\not\in S'$ is induced by the preference between $|S|$ and $0$ (because $S' \sim_i 0$).
The preferences of agents over invitations are now well-defined. 

Let us define a special class of preferences of agents, called single-peaked preferences. 
An agent is said to have a single-peaked preference if the set of outcomes that she strictly prefers to not attending is single-peaked.
\begin{Def}\label{def:preferenceTypes}
Given an instance $(N, P)$, let $Y_i = \{x\in X : x \succ_i 0 \}$ be a subset of outcomes that $a_i$ prefers to $0$. 
$a_i$ is said to have a single-peaked preferences (\SPK) 
if there exist outcomes $l_i \in Y_i$ and $h_i \in Y_i$ such that $x \in Y_i$ if and only if $l_i \leq x \leq h_i$ for any $x\in X$ and there exists an ideal outcome $o_i \in Y_i$ such that 
if $k_1\leq k_2 \leq o_i$ or $k_1 \geq k_2 \geq o_i$, then $k_2 \succeq_i k_1$ for all $k_1, k_2 \in Y$.\footnote{Note that $l_i, h_i, o_i$ may or may not be different.}
If $Y_i = \emptyset$, then we set $o_i = 0$ for convenience.
\end{Def} 

There are two important special cases of SPK-preferences: Increasing preferences (\INC-
preferences) and decreasing preferences (\DEC-preferences). We say that agent $a_i$ has an \INC-preference (\DEC-preference, respectively) if $h_i = o_i = n$ (if $l_i = o_i = 1$, respectively). 
Although we assume that agents can have arbitrary preferences, we will see that \INC-preferences and \DEC-preferences play an important role in the strategic case in Section~\ref{sec:Mechanism}. 


\subsection{Examples}

\begin{Exmp}[Stable invitations are not unique] \label{eg:motivatingExample}
Let us consider three agents $N = \{a_1,a_2, a_3\}$ and the set of outcomes $X = \{0, 1, 2, 3\}$. 
Each agent's preference ordering over $X$ is given by:
\begin{table*}[!h]
	\centering
\begin{tabular}{ccc}
  $P_1: 1 \succ 0 \succ 2 \sim 3$ ,
& $P_2: 1 \succ 0 \succ 2 \sim 3$ ,
&	$P_3: 0 \succ 1 \sim 2 \sim 3$. \\
\end{tabular}
\end{table*}

Note that $a_1$ and $a_2$ have \DEC-preferences with $o_1=o_2 = 1$ and that $a_3$ is unwilling to attend with $o_3 = 0$. Since $a_3$ is not willing to attend, any invitation that contains $a_3$ is not individually rational. Among the remaining four possible invitations with $a_1$ and $a_2$, the empty invitation (i.e., $S = \emptyset$) is not envy-free and the invitation $S = \{a_1, a_2\}$ is not individually rational. 
The two stable invitations are $\{a_1\}$ and $\{a_2\}$. 
\end{Exmp}
\begin{Exmp}[A stable invitation may not exist] \label{eg:noStableSet}
Let us consider two agents $N = \{a_1, a_2\}$ whose preference orderings are: 

\begin{table*}[!h]
	\centering
\begin{tabular}{cc}
  $P_1: 1 \succ 0 \succ 2$ ,
& $P_2: 2 \succ 0 \succ 1$. \\
\end{tabular}
\end{table*}

Note that $a_1$ has a \DEC-preference ($o_1 = l_1 = h_1 = 1$) and $a_2$ has an \INC-preference ($o_2 = l_2 = h_2 = 2$). 
The empty invitation (i.e., $S = \emptyset$) is not envy-free because $a_1$ prefers attending the event. The invitation $S = \{a_1\}$ not envy-free, $S = \{a_2\}$ is not individually rational, and the full invitation (i.e., $S = N$) is not individually rational. Therefore there is no stable invitation.
\end{Exmp}


\subsection{The Non-strategic Case} \label{sec:Algorithm}
We first investigate the non-strategic case of \ASIP\ with truthful agents. We assume that agents can have arbitrary preferences over the size of invitations. 

\subsubsection{Easiness Result}  \label{sec:ASIP-algorithm}
We present an efficient algorithm that determines whether a stable invitation exists and finds a maximum stable invitation (if it exists), when given any instance of \ASIP. Note that a maximum stable invitation needs not be unique. 

\begin{algorithm} 
	\caption{Algorithm for Finding Maximum Stable Invitation}
	\label{alg:stableScheduling}
\begin{algorithmic}[1]
	\State \textbf{Input:} $(N, P = \{P_1, P_2, \dots, P_n\})$
	\For{$k := n, n-1, n-2, \dots, 2, 1$}
		\State $Z \gets \{a_i \in N : k \succ_i 0 \}$
		\If{$|Z| \geq k$}
			\State $S \gets Z$
			\While{$|S| > k$ and $\exists a_i \in S$ such that $0 \succ_i (k+1)$}
				\State Pick any $a_i \in S$ such that $0 \succ_i (k+1)$
				\State $S \gets S \setminus \{a_i\}$
			\EndWhile
			\If{$|S| = k$ and $\{a_j \not\in S : (k+1) \succ_j 0 \} = \emptyset$}
				\State \textbf{return} $S$
			\EndIf
		\EndIf
	\EndFor
	\If{$\{a_i \in N : 1 \succ_i 0 \} = \emptyset$}
		\State \textbf{return} $S = \emptyset$
	\EndIf
	\State \textbf{return} \textsf{NONE} 
\end{algorithmic}
\end{algorithm}

Our algorithm is described in Algorithm~\ref{alg:stableScheduling}, and it works as follows: The algorithm checks whether a stable invitation of size $k$ exists by iterating $k$ in decreasing order (line 2). For some fixed $k$, the algorithm finds a subset $Z$ of $N$ such that $Z$ contains all the agents to whom $k \succ_i 0$ (line 3). The algorithm then checks if $Z$ contains at least $k$ agents (line 4). If it does, then it obtains a subset $S$ of $Z$ by removing the agents to whom $(k+1)$ is not an acceptable outcome (lines 5-9). The algorithm returns $S$ if $|S| = k$ and it is envy-free (lines 10-12).  If the algorithm does not return in line 11 for any $k$, then it checks weather the empty invitation is stable; if it is, then it returns the empty invitation, otherwise it determines that no stable invitations exist (lines 15-18). Note that our algorithm assumes arbitrary preferences of agents. 

Theorem~\ref{thm:asip_algo} states our easiness result for the non-strategic case of \ASIP. (A formal proof can be found in a long version of this paper.)
\begin{Thm} \label{thm:asip_algo}
Given an instance of \ASIP, Algorithm~\ref{alg:stableScheduling} terminates in polynomial time. If a stable invitation exists the algorithm produces a maximum one, otherwise it determines that no stable invitation exists.
\end{Thm}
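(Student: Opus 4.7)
The plan is to establish three facts: (i) the algorithm terminates in polynomial time, (ii) any invitation it returns is stable (soundness), and (iii) whenever a stable invitation of size $k$ exists, the algorithm returns some invitation of size at least $k$ (completeness and maximality, since the outer loop iterates $k$ in decreasing order and falls through to the empty-invitation test).

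Polynomial runtime is immediate: the outer loop runs $n$ times, and each iteration does $O(n)$ work to build $Z$ and remove at most $n$ agents in the inner while loop, giving $O(n^2)$ overall. Soundness is a direct check on the returns. When the algorithm returns $S$ at line~11, every $a_i \in S \subseteq Z$ satisfies $|S| = k \succ_i 0$, so $S$ is individually rational; and the line~10 guard ensures $|S|=k$ while ruling out any $a_j \notin S$ with $(k+1) \succ_j 0$, which by the no-indifference assumption yields $0 \succ_j (k+1)$, i.e.\ envy-freeness. The empty-invitation branch at lines~15--17 is analogous.

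The core of the proof is completeness. I plan to first prove a structural lemma: letting $Z_k = \{a_i : k \succ_i 0\}$ and $B_k = \{a_i : (k+1) \succ_i 0\}$, a stable invitation of size $k \geq 1$ exists if and only if $B_k \subseteq Z_k$ and $|B_k| \leq k \leq |Z_k|$, and in that case every $S^*$ with $B_k \subseteq S^* \subseteq Z_k$ and $|S^*| = k$ is stable. The ``only if'' direction combines individual rationality ($S^* \subseteq Z_k$) with envy-freeness ($B_k \subseteq S^*$), and the ``if'' direction is immediate. Assuming these conditions hold at iteration $k$, I would then argue that the inner while loop, starting from $S = Z_k$ and removing agents only from $Z_k \setminus B_k$, terminates with $|S| = k$ and $B_k \subseteq S$, so the line~10 check succeeds and a stable $S$ of size $k$ is returned.

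The main obstacle, which the formal proof must handle carefully, is the counting in this last step: since $|Z_k \setminus B_k| = |Z_k| - |B_k| \geq |Z_k| - k$, there are enough non-$B_k$ agents available to bring $|S|$ from $|Z_k|$ down to exactly $k$ without ever removing an element of $B_k$, so the loop guard halts at $|S| = k$ rather than earlier. One must also verify the two failure modes are handled correctly: if $|B_k| > k$ (with $B_k \subseteq Z_k$) the loop exhausts $Z_k \setminus B_k$ and halts with $|S| = |B_k| > k$, failing the size check; and if $B_k \not\subseteq Z_k$, any $a \in B_k \setminus Z_k$ lies outside $S$ throughout, so the envy check at line~10 catches it. In both cases the algorithm correctly refuses to return, matching the structural characterization.
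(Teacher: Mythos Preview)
The paper does not actually include a proof of this theorem; it only states that ``a formal proof can be found in a long version of this paper.'' Your proposal is correct and is the natural argument for this algorithm: the structural characterization via $Z_k = \{a_i : k \succ_i 0\}$ and $B_k = \{a_i : (k+1) \succ_i 0\}$ is exactly what makes the correctness transparent, and your soundness, completeness, and counting arguments are all sound. The only minor addition you might make explicit is the trivial case $|Z_k| < k$, where the algorithm skips the body of the \textbf{if} at line~4 entirely; this matches the characterization since $k \leq |Z_k|$ fails.
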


\subsubsection{An Extension to Multiple Alternatives for Time} \label{sec:algo_extension_multiple}
Although we assumed that the time for the event is set in advance, there may exist many alternatives for the time of the event. 
We can extend \ASIP\ such that each agent can specify her preferences over the pairs of (time, number-of-attendees). The organizer then chooses both the time and the invitation, and the goal is to maximize the size of the invitation subject to the same stability constraints. 
In this setting, the organizer can simply execute Algorithm~\ref{alg:stableScheduling} iteratively for each time-alternative, and find a maximum stable invitation with respect to the fixed time alternative.
Therefore our easiness results can be naturally extended to the case with multiple time-alternatives. 
Yet we remark that this extension complicates the problem more severely if we assume strategic agents, as we discuss in Section~\ref{sec:mech_extension_multiple}.

\subsection{The Strategic Case}\label{sec:Mechanism}
The following example shows how an agent is incentivized to misreport her preferences.

\begin{Exmp}[An agent may be incentivized to misreport.] \label{eg:strategicAgents}
Suppose there are $3$ agents whose true preferences are:

\begin{table*}[!h]
	\centering
\begin{tabular}{ccc}
  $P_1: 1 \succ 2 \succ 3 \succ 0$ ,
& $P_2: 3 \succ 0 \succ 1 \sim 2$ ,
&	$P_3: 3 \succ 0 \succ 1 \sim 2$. \\
\end{tabular}
\end{table*}

Notice that agent $a_1$ has a \DEC-preference (with $l_1 = o_1 = 1$ and $h_1 = 3$) while the other two agents have \INC-preferences (with $o_i = l_i = h_i = 3$ for $i\in \{2,3\}$).

If all agents are truthful, the full invitation ($S = \{a_1, a_2, a_3\}$) is maximum and stable. 
Suppose agent $a_1$ misreports her preference ordering as $\hat{P_1}$:
\begin{table*}[!h]
	\centering
\begin{tabular}{c}
  $\hat{P_1}: 1 \succ 0 \succ 2 \sim 3$.
\end{tabular}
\end{table*}

Given the preferences $\{\hat{P_1}, P_2, P_3\}$, the maximum stable invitation is $\hat{S} = \{a_1\}$; the full invitation $S$ is seemingly no longer stable.
Since $a_1$ prefers $\hat{S}$ to ${S}$ (as $1 \succ_1 3$), she has an incentive to misreport her preference ordering. 
\end{Exmp}

\subsubsection{Definitions and Notation} \label{sec:def_notation_asip_mechanism}
Let us formally define a mechanism in the context of \ASIP\ with strategic agents. 
Although we only provide a definition of a deterministic mechanism here, we discuss how one can generalize the definition to a randomized mechanism at the end of this section. 

\begin{Def} \label{def:mechanism}
Given an instance $(N, P)$ of \ASIP, we define $V_i$ (the set of available actions to $a_i$) to be the set of all preferences over $X$ where $X = \{0, 1, 2, \dots, n\}$ is the set of outcomes. 
A (deterministic) \emph{mechanism} is a pair $(V, Z)$ where $V = (V_1 \times \cdots \times V_n)$ is the set of action profiles of all agents and $Z: V \mapsto U$ is a mapping from each action profile to an invitation in $U$ where $U = 2^{N}$.
Let $V_{-i} = (V_1 \times \cdots \times V_{i-1} \times V_{i+1} \times \cdots \times V_{n})$ be the set of action profiles available to all agents but agent $a_i$. A mechanism $(V, Z)$ is said to be \emph{strategy-proof} if for all $a_i\in N$ it holds that $Z(P_i, v_{-i}) \succeq_i Z(v_i, v_{-i})$ for all $v_i \in V_i$ and $v_{-i} \in V_{-i}$.
\end{Def}


\subsubsection{Impossibility Result}
We start with an impossibility result. Theorem~\ref{thm:impossibility} states that strategy-proofness of a mechanism and capability of finding a stable invitation are incompatible (let alone finding a maximum one). 

\begin{Thm} \label{thm:impossibility}
When given an instance of \ASIP, there is no strategy-proof mechanism that finds a stable invitation (with respect to true preferences of agents), even if it exists. 
\end{Thm}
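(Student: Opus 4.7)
The plan is to construct a small two-agent instance with a symmetric structure that admits exactly two stable invitations, then show that committing to either one invites a single-agent misreport that makes the other invitation uniquely stable, forcing the mechanism to produce an outcome the deviator strictly prefers.

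Concretely, I take $N = \{a_1, a_2\}$ with truthful preferences $P_1 : 1 \succ 0 \succ 2$ and $P_2 : 1 \succ 0 \succ 2$. Both agents have \DEC-preferences with peak $1$, so any invitation of size $2$ fails individual rationality. A quick enumeration shows $\emptyset$ is not envy-free (each agent strictly prefers attending a gathering of size $1$), $\{a_1, a_2\}$ is not individually rational, while both $\{a_1\}$ and $\{a_2\}$ are stable. Hence any mechanism $Z$ that returns a stable invitation whenever one exists must output one of the two singletons at this profile; without loss of generality assume $Z(P_1, P_2) = \{a_1\}$.

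Now I consider the deviation in which $a_2$ reports $P'_2 : 1 \succ 2 \succ 0$. Under the profile $(P_1, P'_2)$, the subset $\{a_1\}$ is no longer envy-free, because $a_2$ would (reportedly) strictly prefer joining a gathering of size $2$ over the outside option; $\{a_1, a_2\}$ is still not individually rational for $a_1$; $\emptyset$ is still not envy-free (each agent wants to join alone); and $\{a_2\}$ is individually rational for $a_2$ and envy-free since $P_1$ still has $0 \succ_1 2$. So the unique stable invitation is $\{a_2\}$, and any stability-respecting $Z$ must output $Z(P_1, P'_2) = \{a_2\}$. But under $a_2$'s true preference $P_2$, the mechanism's truthful outcome $\{a_1\}$ gives her the outside option ($\sim_2 0$), whereas the deviation outcome $\{a_2\}$ gives her outcome $1$, and $1 \succ_{P_2} 0$. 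Hence $a_2$ strictly gains by misreporting, contradicting strategy-proofness. The case $Z(P_1, P_2) = \{a_2\}$ is handled by the symmetric deviation of $a_1$ to $P'_1 : 1 \succ 2 \succ 0$.

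I do not expect a conceptual obstacle; the work is the short case analysis that pins down the unique stable invitation under the manipulated profile. The one subtlety to be careful about is the precise envy-freeness condition (an uninvited $a_i$ must have $|S|+1 \prec_i 0$), which is exactly what the engineered deviation $P'_2$ is designed to break, flipping $a_2$ from ``happily uninvited'' to ``envious of size $2$'' and thereby eliminating $\{a_1\}$ as a stable outcome without destroying $\{a_2\}$.
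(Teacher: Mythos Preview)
Your proposal is correct and essentially mirrors the paper's own proof: the same two-agent instance with $P_1 = P_2 : 1 \succ 0 \succ 2$, the same observation that the only stable invitations are the two singletons, and the same symmetric single-agent deviation argument. The only cosmetic difference is that the paper uses the deviation $v_2 : 1 \sim 2 \succ 0$ while you use $P'_2 : 1 \succ 2 \succ 0$; both make outcome $2$ acceptable to the deviator and thus kill envy-freeness of $\{a_1\}$ without disturbing $\{a_2\}$, so the argument goes through identically.
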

\begin{proof} 
Consider two agents with preferences:
\begin{table*}[!h]
	\centering
\begin{tabular}{cc}
  $P_1: 1 \succ 0 \succ 2$ ,
& $P_2: 1 \succ 0 \succ 2$. \\
\end{tabular}
\end{table*}

Both agents prefer to attend the event alone, and there are only two stable invitations: $S_1 = \{a_1\}$ and $S_2 = \{a_2\}$.
Let $V_i$ be the set of all action profiles of agent $a_i$, and let $V = V_1 \times V_2$. 
Consider any mapping $Z$ from $V$ to $U = 2^{N}$.
We will show that if $(V, Z)$ is a strategy-proof mechanism, $Z$ must map $(P_1, P_2)$ to neither $S_1$ nor $S_2$, which means that the mechanism does not find a stable invitation.

If $Z$ maps $(P_1, P_2)$ to $S_1$, then $a_2$ has an incentive to report her preference untruthfully as $v_2 \in V_2$ instead:
\begin{table*}[!h]
	\centering
\begin{tabular}{c}
  $v_2: 1 \sim 2 \succ 0$.
\end{tabular}
\end{table*}

Given $(P_1, v_2)$, the only stable invitation is now $S_2=\{a_2\}$ because $\{a_1, a_2\}$ is not individually rational, while neither $\emptyset$ nor $S_1=\{a_1\}$ is envy-free. 
Notice that $a_2$ strictly prefers $S_2$ over $S_1$ (because $S_1 \sim_2 0$ as $a_2$ is not invited) and therefore $a_2$ has an incentive to deviate from $P_2$ to $v_2$.  Similarly, if $Z$ maps $(P_1, P_2)$ to $S_2$, then $a_1$ has an incentive to deviate from $P_1$ to an untruthful preference ordering (namely, $v_1 : 1 \sim 2 \succ 0$). Therefore if a mechanism $(V, Z)$ is strategy-proof, then $Z$ must map $(P_1, P_2)$ to neither $S_1$ nor $S_2$; yet these two invitations are the only stable invitations given $(P_1, P_2)$.
Thus there is no strategy-proof mechanism that can find a stable invitation for this instance.
\end{proof}
The intuition behind the example used in our proof is simple.  While the event organizer is trying to maximize attendance, agents prefer minimizing attendance. This conflict of interests can make it impossible for the organizer to find a stable invitation while ensuring strategy-proofness. 

Note that all of $\{P_1, P_2, v_1, v_2\}$ used in our proof are \DEC-preferences. Even if we limit both $V_1$ and $V_2$ to be the set of \DEC-preferences over $X$ (instead of arbitrary preferences over $X$), which limits the ability of agents to manipulate the mechanism, this impossibility result still holds. 
This strengthens our impossibility result.

Given this impossibility result, we ask the following natural question: Is it possible for a manipulator to determine efficiently, which action of his would lead to a more favorable outcome (than his truthful action), provided that he knows preferences of all other agents? The following lemma informally answers this question while we omit details due to space.
We also note that this lemma is still applicable when we extend \ASIP\ to \GSIP\ in next section.
\begin{Lemma}[Informal] \label{lemma:gsip_manipulation}
	There exists a polynomial time algorithm, given an instance $(N, P)$ of \GSIP\ and a mechanism $(V, Z)$, that decides in polynomial time whether there exists a certain preference ordering $v_i \in V_i$ of $a_i$ such that $a_i$ (strictly) prefers $Z(v_i, P_{-i})$ to $Z(P_i, P_{-i})$.
\end{Lemma}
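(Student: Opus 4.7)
The plan is to exploit the compact structure of preferences in GSIP to reduce the manipulation check to polynomially many queries to $Z$. The key observation is that agent $a_i$'s preference over the $2^n$ possible invitations is determined by a polynomial-size representation: her acceptance set $A_i$, her rejection set $R_i$, and her size-preference relation over $\{0,1,\dots,n\}$. Two invitations $S, S'$ that agree on (i) whether $a_i \in S$, (ii) whether $A_i \subseteq S$, (iii) whether $R_i \cap S = \emptyset$, and (iv) the cardinality $|S|$ are indifferent to $a_i$. Consequently, all invitations partition into $O(n)$ indifference classes under $P_i$, even though there are exponentially many invitations.

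First I would compute $S^{\star} = Z(P_i, P_{-i})$, identify its indifference class under $P_i$, and list the $O(n)$ classes strictly preferred to $S^{\star}$. The manipulation question then reduces to: for each preferred target class $C$, decide whether some $v_i \in V_i$ satisfies $Z(v_i, P_{-i}) \in C$. A ``yes'' for any target yields a manipulation; a ``no'' for every target yields none.

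For each target class $C$ I would construct a polynomial-size family of \emph{canonical} reports and query $Z$ on each. The canonical report for a target of the form ``$a_i$ is invited, constraints are met, $|S| = k$'' declares the empty acceptance and rejection sets (the most permissive identity-based report), and a size-preference in which only sizes matching $C$ are declared preferred to $0$. For a target of the form ``$a_i$ is uninvited with $|S| = k$'', the canonical report declares no size preferred to $0$ (or only sizes other than $k+1$, as needed to block envy). The algorithm returns ``manipulation exists'' iff some canonical query yields an outcome in a strictly preferred class.

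The main obstacle is a ``without loss of generality'' lemma: if \emph{any} $v_i \in V_i$ produces an outcome in a preferred class $C$, then one of the canonical reports for $C$ does as well. I expect this to follow from a monotonicity argument keyed to stability---making $v_i$ less restrictive (smaller accept/reject sets, more permissive size-preferences) can only expand the collection of invitations that are stable under $(v_i, P_{-i})$, so any invitation stable under a ``restrictive'' manipulating $v_i$ remains stable under its ``permissive'' canonical counterpart. The delicate part is handling how the mechanism selects among multiple stable invitations (tie-breaking and the size-maximization rule of Algorithm~\ref{alg:stableScheduling}); this may require slightly enlarging the canonical family (e.g., varying which single size is declared acceptable) but will keep the total number of queries polynomial in $n$.
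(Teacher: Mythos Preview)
The paper does not actually prove this lemma: it is stated as ``informal'' and the authors explicitly write that they ``omit details due to space.'' So there is no paper proof to compare against, and your proposal must be judged on its own.

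Your plan has the right overall shape---reduce to polynomially many candidate reports and query $Z$ on each---but the crucial ``without loss of generality'' step is not justified and, as stated, is false for arbitrary mechanisms. The lemma, taken literally, quantifies over an arbitrary mechanism $(V,Z)$; nothing in Definition~\ref{def:mechanism} forces $Z$ to output stable invitations, to be size-maximizing, or to satisfy any monotonicity. A black-box $Z$ could return a favourable outcome for $a_i$ only on one specific, non-canonical report $v_i$ and return the truthful outcome everywhere else; no polynomial family of canonical queries would detect this. Your monotonicity argument (``making $v_i$ less restrictive can only expand the set of stable invitations'') is also not correct even for stability-respecting mechanisms: relaxing $a_i$'s reported constraints enlarges the set of individually rational invitations containing $a_i$ but simultaneously \emph{shrinks} the set of envy-free invitations excluding $a_i$, so the stable set does not move monotonically. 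To make the argument go through you must either (a) restrict to a specific mechanism such as Algorithm~\ref{alg:stableScheduling} and exploit its particular structure, or (b) make explicit structural assumptions on $Z$ that the paper leaves implicit in the word ``informal.'' Without one of these, the WLOG lemma---which you yourself flag as the ``main obstacle''---does not hold.

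A smaller point: you list whether $A_i\subseteq S$ and $R_i\cap S=\emptyset$ among the data determining $a_i$'s indifference class. In the paper's model these sets are \emph{stability constraints}, not preference parameters; the induced preference over invitations is governed only by membership and size (see the remark after Definition~\ref{def:invitation_gsip}). This does not change the $O(n)$ count of indifference classes, but it does mean conditions (ii) and (iii) in your partition are superfluous.
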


\subsubsection{Easiness Result (Strategy-proof Mechanism)}
Let us now consider a special instance of \ASIP\ in which all agents have \INC-preferences; we call such instances \INC-instances of \ASIP. Since each agent has an \INC-preference, we can induce the \emph{minimum threshold}, $l_i$, of agent $a_i$ such that $k \succ_i 0$ if and only if $k \geq l_i$.
 In the case where $a_i$ has no acceptable outcome (i.e., $0 \succ_i x$ for all $x\in (X \setminus\{0\} )$), we simply define $l_i = n+1$. 
Let $l_i$ denote the \emph{true induced minimum threshold} of $a_i$ given $P_i$ and $L_i$ the \emph{reported induced minimum threshold} of $a_i$ given $v_i$ (which may differ from $P_i$). 
Without loss of generality, we can assume that agents report $\{L_i\}$ to the organizer (instead of reporting preference orderings $\{v_i\}$) and that $L_i$ values are sorted in non-decreasing order.

\begin{algorithm}
	\caption{Strategy-proof Mechanism for Finding Maximum Stable Invitation}
	\label{alg:mechanism}
\begin{algorithmic}[1]
	\State \textbf{Input:} $(N, \{L_i : a_i \in N\})$
	\For{$k := n, n-1, n-2, \dots, 2, 1$}
		\If{$L_k \leq k$}
			\State \textbf{return} $S = \{a_1, a_2, \dots, a_{k-1}, a_{k}\}$
		\EndIf
	\EndFor
	\State \textbf{return} $S = \emptyset$
\end{algorithmic}
\end{algorithm}
Our (deterministic) mechanism is presented in Algorithm~\ref{alg:mechanism}, and it works as follows: It is given a sorted list of threshold values in non-decreasing order (line 1). The mechanism determines whether a stable invitation of size $k$ exists by iterating in decreasing order (line 2). If a stable invitation of size $k$ exists (line 3), then it returns the invitation $S$ that contains precisely $k$ agents (line 4). If no stable invitation has been found, then mechanism determines that the empty invitation is the only stable invitation, and returns it in line 7.  

\begin{Thm} \label{thm:mechanism} 
The mechanism described in Algorithm~\ref{alg:mechanism} runs in polynomial time, and finds a maximum stable invitation, provided that all agents report truthfully. The mechanism is also strategy-proof. 
\end{Thm}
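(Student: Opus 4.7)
The plan is to verify three things: polynomial runtime, correctness when all reports are truthful, and strategy-proofness.

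Runtime is immediate: an initial $O(n \log n)$ sort followed by at most $n$ iterations of $O(1)$ work suffices. For correctness under truthful reports, I first translate stability into the INC-setting. Since each $a_j$ has INC-preferences, an invitation $S$ of size $k$ is individually rational iff $l_i \leq k$ for every invited $a_i$, and envy-free iff $l_j \geq k+2$ for every uninvited $a_j$ (because $(k+1) \prec_j 0$ reduces to $l_j > k+1$ and thresholds are integers). Letting $f(k) = |\{i : L_i \leq k\}|$, after sorting the condition $L_k \leq k$ is equivalent to $f(k) \geq k$. The algorithm returns the largest $k^*$ with $f(k^*) \geq k^*$ and invites the first $k^*$ agents. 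Individual rationality holds because $L_i \leq L_{k^*} \leq k^*$ for $i \leq k^*$. Envy-freeness follows from the maximality of $k^*$: for all $k > k^*$ we have $L_k > k$, so in particular $L_{k^*+1} \geq k^*+2$, and every uninvited agent clears the envy-freeness bound. The returned invitation is maximum because any stable invitation of size $k' > k^*$ would contain $k'$ agents with $L_i \leq k'$, forcing $L_{k'} \leq k'$ and contradicting the choice of $k^*$. If the loop never returns then $L_1 > 1$, so no agent prefers attending alone and $\emptyset$ is vacuously stable.

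For strategy-proofness, I fix an agent $a_i$ with true threshold $l_i$ and compare the truthful output size $k^T$ to the output $k^M$ when $a_i$ reports some $L_i \neq l_i$. Let $f^T$ and $f^M$ be the corresponding counting functions; they agree everywhere except on the interval between $\min(L_i, l_i)$ and $\max(L_i, l_i) - 1$, where $f^M = f^T \pm 1$ according to whether $a_i$ shades the report down or up. If $a_i$ is truthfully invited (so $l_i \leq k^T$), then shading down gives $f^M = f^T$ on all $k \geq l_i$, hence on all $k > k^T$; no new valid index appears and $k^M = k^T$. Shading up only decreases $f$, so $k^M \leq k^T$ and the outcome is weakly worse by INC monotonicity (or $a_i$ is excluded altogether, utility $0 < k^T$). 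If instead $a_i$ is truthfully uninvited, envy-freeness gives $l_i \geq k^T + 2$ and the truthful utility is $0$. Any misreport that enlarges the output to $k^M > k^T$ must exploit the indicator gap, which forces $L_i \leq k^M < l_i$; and $k^M \leq k^T < l_i$ in the other case. Hence $k^M < l_i$ in every case, so any invitation that $a_i$ is pulled into has size strictly below her true threshold and is ranked below $0$ by her INC-preference, while any uninvited misreport leaves her at utility $0$.

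The main obstacle is the second half of the strategy-proofness argument. It is tempting to address only manipulations that force a large attendance; the subtler danger is that an uninvited agent, by shading her reported threshold downward, might be swept into a small invitation she does not actually want. The INC-preference structure closes this off cleanly, since any outcome below one's true threshold is strictly dispreferred to abstention, so the indicator-gap bound $k^M < l_i$ is exactly strong enough to kill every potentially profitable deviation.
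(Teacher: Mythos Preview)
Your proof is correct and follows essentially the same line as the paper's sketch: over-reporting can only shrink the chosen invitation (unprofitable under \INC-preferences), while under-reporting either leaves the outcome unchanged or pulls the agent into an invitation whose size still falls below her true threshold. Your write-up is considerably more detailed than what the paper provides---you formalize the counting function $f$, organize the primary case split by whether $a_i$ is truthfully invited rather than by the direction of the misreport, and explicitly invoke the envy-freeness bound $l_i \geq k^T+2$ for uninvited agents---but the underlying mechanics are the same as the paper's argument.
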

\begin{proof}[Proof sketch]
We provide our intuition behind the proof.  If agent $a_i$ over-reports with $L_i > l_i$, it can only result in reducing the size of an invitation that the mechanism chooses, but $a_i$ has an \INC-preference and therefore $a_i$ has no incentive to over-report. If $a_i$ under-reports with $L_i < l_i$, either it does not change the invitation that the mechanism chooses or it leads to an invitation whose size is not acceptable to $a_i$ (i.e., the size is still below her threshold, $l_i$). 
A formal proof can be found in a long version of this paper.
\end{proof}

Note that Theorem~\ref{thm:mechanism} implies that an \INC-instance of \ASIP\ always admit a stable invitation because the mechanism always returns an invitation (possibly the empty invitation). 
Contrary to the previous case where the conflict of interests between agents and the organizer lead to an impossibility result, we have obtained a strategy-proof mechanism that also finds a maximum stable invitation when the interests of both parties align in \INC-instances of \ASIP, which is to maximize attendance.

\subsubsection{An Extension to Multiple Alternatives for Time} \label{sec:mech_extension_multiple}
In Section~\ref{sec:algo_extension_multiple}, we discussed the setting with multiple alternatives for the time of the event in the non-strategic case.  In the strategic case, our impossibility result for the general case is immediately implied.
For \INC-instances of \ASIP, we report that even if there are only two alternatives for the time, it is impossible to design a strategy-proof mechanism that finds a stable invitation (we can construct an example that is similar to the one used in the proof of Theorem~\ref{thm:impossibility}).
Our intuition is that, even if the interests of the agents and organizer align to maximize attendance, preferences over alternatives for the time can incentivize agents to misreport in order to lead to a more favorable outcome. In particular, over-reporting ($L_i > l_i$) can give an agent the veto power on certain alternatives for the time, which in turn can lead to a more favorable outcome for the concerned agent.

\subsubsection{Note on Randomized Mechanisms} 
Although we only discussed deterministic mechanisms, our results can be extended to randomized mechanisms. Here is the general outline. First we define $Z$ to be a mapping from $V$ to $\Pi(U)$ where $\Pi(U)$ denotes the set of all probability distributions over $U$. The definition of a strategy-proof mechanism must change accordingly -- we do this by adopting the axioms in the von Neumann-Morgenstern utility theorem~\cite{von1947theory}. We introduce lotteries over invitations and define preferences of agents over lotteries. Given a probability distribution over invitations, one can compute the expected cardinal utility of lotteries. We then define a strategy-proof mechanism analogously to Definition~\ref{def:mechanism}: for each $a_i$, it must hold that the expected utility of $Z(P_i, v_{-i})$ is no less than the expected utility of $Z(v_i, v_{-i})$ for all $v_i\in V_i$ and for all $v_{-i} \in V_{-i}$. We note that our impossibility result given by Theorem~\ref{thm:impossibility} still holds: If $(V, Z)$ is a strategy-proof mechanism, then $Z(P_1, P_2)$ must assign zero probability to both $\{a_1\}$ and $\{a_2\}$, yet these are the only two stable invitations.

\section{General Stable Invitation Problem} \label{sec:GSIP}

We have so far explored the problem of finding a maximum stable invitation in a setting 
where agents are indifferent among the invitations of the same size.
We now remove this assumption and allow agents to specify their preferences over exactly which agents they like or do not like to attend the event, in addition to their preferences over the size of invitations.

\subsection{Definitions and Notation}
Let us formally define the General Stable Invitation Problem (\GSIP) and solution concepts. 

\begin{Def}\label{def:GSIP}
An instance of the \emph{General Stable Invitation Problem} (\GSIP) is a tuple $(N, P, F, R)$ where $N$ and $P$ are defined the same as before (see Definition~\ref{def:ASIP}), $F = (F_1, F_2, \dots, F_n)$ is a collection of acceptance sets where $F_i \subseteq (N\setminus\{a_i\})$ for each $i$, and $R = (R_1, R_2, \dots, R_n)$ is a collection of rejection sets where $R_i \subseteq (N\setminus\{a_i\})$ for each $i$. We interpret $F_i$ and $R_i$ as a constraint such that $a_i$ is willing to attend the event only if all agents in $F_i$ attend and no agent in $R_i$ attends. 
\end{Def}
Given an instance $(N, P, F, R)$ of \GSIP, we say that it is an $(\alpha,\beta)$-instance, where $\alpha = \max_{a_i\in N} |F_i|$ and $\beta = \max_{a_i\in N} |R_i|$. It holds by definition that $0 \leq \alpha, \beta \leq n-1$; in particular empty acceptance sets and rejection sets are allowed in our definition. We will later see that our easiness and hardness results rely on $(\alpha, \beta)$ values. Notice that any \ASIP\ instance is a $(0,0)$-instance of \GSIP, and therefore \ASIP\ is a special case of \GSIP.\footnote{We emphasize that agents still have preferences over outcomes in \GSIP, although our results in this section focus on how acceptance and rejection sets affect easiness and hardness of \GSIP.  }

We define an \emph{invitation} and its stability constraints the same way we did in Section~\ref{sec:ASIP}.
\begin{Def} \label{def:invitation_gsip}
An \emph{invitation} for an instance $(N, P, F, R)$ is a subset $S$ of $N$, and is interpreted as the organizer invites the agents in $S$.
An invitation $S$ is said to be \emph{individually rational} if for every agent $a_i\in S$ it holds that
$F_i \subseteq S$, $R_i\cap S = \emptyset$, and $|S| \succ_i 0$.
An invitation $S$ is said to be \emph{envy-free} if for every agent $a_i\not\in S$, $S' = S \cup \{a_i\}$ is not an individually rational invitation with respect to $a_i$; that is, if at least one of the following holds: $F_i \not\subseteq S'$, $R_i \cap S' \neq \emptyset$, and $|S'| = (|S| + 1) \prec_i 0$.
An invitation is \emph{stable} if it is both individually rational and envy-free.
\end{Def}

For each agent $a_i$, we can naturally induce from $P_i$ the preference of $a_i$ over the set of all invitations, $2^{N}$, in the same manner as we did in Section~\ref{sec:asip_def}. Therefore, the preferences of agents over invitations are well-defined. Note that individual rationality and envy-freeness are properties of an invitation, not preferences of agents, and therefore we define the induced preferences over invitations to be independent of the properties of a solution concept. 

Next we define a special class of preferences, called simple preferences. 
\begin{Def} \label{def:simple_pref_gsip}
Agent $a_i$ is said to have a \emph{simple} preference, if agent $a_i$ strictly prefers any outcome $x \in X$ with $x \neq 0$ to her outside option, $0$ (i.e.\ for all $x\in X$ with $x \neq 0$, $x \succ_i 0$).
\end{Def}
Note that when an agent has a simple preference, she still has a preference ordering over outcomes, but she strictly prefers attending to not attending.

We emphasize that in \GSIP\ agents still have preferences over sizes of invitations.  
However, our hardness and impossibility results for \GSIP\ are provided while assuming that all agents have simple preferences. 
Note that this assumption strengthens our negative result because it directly implies the same negative result for \GSIP\ with arbitrary preferences.
On the other hand, we provide our easiness result for \GSIP\ with arbitrary preferences, which of course implies the same positive result for \GSIP\ with simple preferences. Henceforth we describe an instance of \GSIP\ simply as $(N, F, R)$ by omitting $P$ when we assume that all agents have simple preferences.

\subsection{Examples}

\begin{Exmp}[Stable invitations are not unique] \label{eg:gsip_not_unique}
Suppose there are $4$ agents denoted by $N = \{a_1, a_2, a_3, a_4\}$.
We assume that these agents have simple preferences, and their acceptance sets and rejection sets are given by:

\begin{table*}[!h]
	\centering
\begin{tabular}{cccc}
  $F_1 = \{a_2\}$ ,  & $F_2 = \{a_1\}$ , & $F_3 = \{a_4\}$ , & $F_4 = \{a_3\}$ , \\  
  $R_1 = \{a_3\}$ ,  & $R_2 = \{a_4\}$ , & $R_3 = \{a_1\}$ , & ~$R_4 = \{a_2\}$.~
\end{tabular}
\end{table*}

Agents $a_1$ and $a_2$ have each other in their acceptance sets, while they reject agents $a_3$ and $a_4$, respectively. Similarly, agents $a_3$ and $a_4$ have each other in their acceptance sets, while they reject agents $a_1$ and $a_2$, respectively. 
There are three stable invitations in this example: $\emptyset$, $\{a_1, a_2\}$, and $\{a_3, a_4\}$ (among $2^{|N|} = 16$ possible invitations). The latter two are maximum stable invitations.  One can easily verify that all other invitations are not stable; for example, $S = \{a_1, a_3, a_4\}$ is not individually rational due to agents $a_1$ and $a_3$.
\end{Exmp}

\begin{Exmp}[Stable invitations may not exist] \label{eg:gsip_not_exist}
Suppose there are $3$ agents denoted by $N = \{a_1, a_2, a_3\}$.
We assume that these agents have simple preferences, and their acceptance sets and rejection sets are given by:
\begin{table*}[!h]
	\centering
\begin{tabular}{cccc}
  $F_1=F_2=F_3=\emptyset$ ,  &  $R_1 = \{a_2\}$ ,  & $R_2 = \{a_3\}$ , & $R_3 = \{a_1\}$.
\end{tabular}
\end{table*}

All three agents accept no other agents, while they form a cyclic rejection-relationship.  
It is easy to verify that there is no stable invitation in this example. 
The empty invitation (i.e.\ $S = \emptyset$) is not envy-free because each agent then would rather attend the event.
An invitation with any single agent is not envy-free; for example if $S = \{a_1\}$, then $S$ is not envy-free with respect to agent $a_2$. 
An invitation with any pair of agents is not individually rational; for example if $S = \{a_2, a_3\}$, then $S$ is not individually rational with respect to $a_2$. 
The full invitation (i.e.\ $S = N$) is not individually rational due to each agent's rejection set. 
Therefore this example admits no stable invitation.
\end{Exmp}


\subsection{The Non-strategic Case} \label{sec:gsip_complexity}
The decision problem of \GSIP\ is whether a stable invitation of size at least $k$ exists given an instance of \GSIP. We show that \GSIP\ is NP-hard even if we assume truthful agents with simple preferences. Note that NP-hardness for \GSIP\ implies NP-completeness because \GSIP\ is clearly in NP (i.e.\ one can efficiently check whether a given invitation is stable).

Theorem~\ref{thm:hardness_gsip_1_1} states that the decision problem of \GSIP\ with simple preferences is NP-hard even if the size of all acceptance sets and rejection sets are at most one. Moreover the theorem states that the problem remains NP-hard even if we relax the stability requirement and only seek a maximum individually rational invitation that may not be envy-free. Theorem~\ref{thm:hardness_gsip_0_2} delivers a similar negative result even if we restrict to $(0,2)$-instances of \GSIP\ with simple preferences. (All omitted proofs of this section can be found in a long version of this paper.)

\begin{Thm} \label{thm:hardness_gsip_1_1}
	It is NP-hard to decide whether a $(1, 1)$-instance of \GSIP\ admits a stable invitation of size at least $k$, even if all agents have simple preferences. 
	It is NP-hard to decide whether a $(1, 1)$-instance of \GSIP\ admits an individually rational invitation of size at least $k$, even if all agents have simple preferences.
\end{Thm}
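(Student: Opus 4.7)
My plan is to reduce from Maximum Independent Set on $3$-regular (cubic) graphs, which is a classical NP-hard problem. Given such a graph $G = (V, E)$, I would fix an arbitrary ordering of the three edges incident to each vertex and construct a $(1,1)$-instance of \GSIP\ in which every agent has a simple preference as follows.

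For each vertex $v \in V$ I would introduce three ``vertex-copy'' agents $a_v^{(1)}, a_v^{(2)}, a_v^{(3)}$ and wire their acceptance sets into a $3$-cycle: $F_{a_v^{(1)}} = \{a_v^{(2)}\}$, $F_{a_v^{(2)}} = \{a_v^{(3)}\}$, $F_{a_v^{(3)}} = \{a_v^{(1)}\}$. For each edge $e = (u,v) \in E$, where $e$ is the $i$-th edge at $v$ and the $j$-th edge at $u$, I would set $R_{a_v^{(i)}} = \{a_u^{(j)}\}$ and $R_{a_u^{(j)}} = \{a_v^{(i)}\}$. Since $G$ is cubic, every agent then has exactly one element in its acceptance set and exactly one in its rejection set, yielding a $(1,1)$-instance.

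The main claim I then need to prove is that the maximum individually rational invitation and the maximum stable invitation both have size exactly $3 \cdot \mathrm{IS}(G)$, where $\mathrm{IS}(G)$ is the independence number of $G$. The key step, obtained by iterating the acceptance implication once around the length-$3$ cycle, is that in any individually rational $S$ the three copies of a vertex $v$ are either all in $S$ (we say $v$ is ``selected'') or all out of $S$. The rejection constraints then force the set of selected vertices to be an independent set of $G$, and $|S|$ equals three times its size. Conversely, given any independent set $I \subseteq V$, the invitation $S_I = \bigcup_{v \in I} \{a_v^{(1)}, a_v^{(2)}, a_v^{(3)}\}$ is individually rational of size $3|I|$, and is automatically envy-free: for any $a_v^{(i)} \notin S_I$ its $F$-successor within $v$'s gadget is also absent from $S_I$, so adding $a_v^{(i)}$ would violate its $F$-constraint. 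Consequently, deciding whether a stable (resp.\ individually rational) invitation of size at least $k$ exists is equivalent to deciding whether $\mathrm{IS}(G) \geq \lceil k/3 \rceil$, which is NP-hard, yielding both parts of the theorem simultaneously.

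The step that needs the most careful verification is the all-or-nothing claim for the cyclic $F$-gadget --- ruling out, for example, configurations in which exactly two copies of some vertex lie in $S$. This is immediate from the cycle: whenever some copy $a_v^{(i)}$ is in $S$, individual rationality and $F_{a_v^{(i)}}$ force $a_v^{(i+1)}$ in, then $a_v^{(i+2)}$ in, and the implication loops back consistently to $a_v^{(i)}$. Once this is in hand, the remaining verifications --- that the rejection edges encode exactly the independence condition, and that envy-freeness comes for free because any excluded copy has an excluded $F$-successor --- are routine.
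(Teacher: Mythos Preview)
Your reduction is correct and complete. The paper defers the proof of this theorem to its long version, so no line-by-line comparison is possible here; however, the bibliography cites Garey--Johnson--Stockmeyer's \emph{Some simplified NP-complete graph problems}, the standard source for NP-hardness of Independent Set on cubic graphs, which strongly suggests the authors had essentially the same reduction in mind. Your gadget (three vertex-copies wired into an $F$-cycle, with the three incident edges distributed one-per-copy via $R$) is exactly the natural way to force $|F_i|=|R_i|=1$ while encoding independence, and your observation that the $F$-cycle simultaneously yields envy-freeness for free is precisely what lets the same construction handle both the stable and the individually-rational parts of the statement at once.

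One cosmetic suggestion: in the final sentence, rather than phrasing the equivalence as ``size $\geq k$ iff $\mathrm{IS}(G) \geq \lceil k/3\rceil$'', it is cleaner for the many-one reduction to simply set $k = 3m$ where $m$ is the threshold in the source Independent-Set instance. Your all-or-nothing lemma already shows every individually rational $S$ has size a multiple of $3$, so nothing is lost and the reduction map $(G,m)\mapsto((N,F,R),3m)$ becomes explicit.
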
 
\begin{Thm} \label{thm:hardness_gsip_0_2}
	It is NP-hard to decide whether a $(0, 2)$-instance of \GSIP\ admits a stable invitation of size at least $k$, even if all agents have simple preferences. 
	It is NP-hard to decide whether a $(0, 2)$-instance of \GSIP\ admits an individually rational invitation of size at least $k$, even if all agents have simple preferences. 
\end{Thm}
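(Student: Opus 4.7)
The plan is to prove the two parts of the theorem through separate reductions, each producing a $(0,2)$-instance of \GSIP\ with simple preferences. The unifying observation is that when all acceptance sets are empty and preferences are simple, an invitation $S$ is individually rational iff $S$ is an independent set in the ``rejection digraph'' $D$ on $N$ (with arc $a_i\to a_j$ whenever $a_j\in R_i$), and $S$ is stable iff $S$ is additionally a \emph{kernel} of $D$, i.e.\ every vertex outside $S$ has an out-arc into $S$. The restriction $\beta\leq 2$ is exactly the requirement that $D$ have maximum out-degree at most $2$, so the theorem amounts to NP-hardness of maximum independent set and maximum kernel in digraphs of out-degree at most $2$.

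For the stable claim I would reduce from \textsc{3-SAT}. Given $\phi$ with variables $x_1,\dots,x_n$ and clauses $C_1,\dots,C_m$, build a variable gadget of two agents $t_i,f_i$ for each $x_i$ with $R_{t_i}=\{f_i\}$ and $R_{f_i}=\{t_i\}$; the induced $2$-cycle forces every stable invitation to contain exactly one of $\{t_i,f_i\}$, recording a truth assignment. For each clause $C_j=\ell_{j,1}\vee\ell_{j,2}\vee\ell_{j,3}$ build a clause gadget of three agents $c_j^1,c_j^2,c_j^3$ with
\[
R_{c_j^k}=\bigl\{\,c_j^{(k\bmod 3)+1},\;y_{j,k}\,\bigr\},\qquad k=1,2,3,
\]
where $y_{j,k}$ is the variable agent associated with literal $\ell_{j,k}$. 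If no $y_{j,k}$ lies in $S$ (unsatisfied clause) then the subdigraph induced on $\{c_j^1,c_j^2,c_j^3\}$ is a directed $3$-cycle, which admits no kernel, so $S$ cannot be stable; if some $y_{j,k}\in S$, a short case analysis shows that exactly one of $c_j^1,c_j^2,c_j^3$ must be added to recover both individual rationality and envy-freeness. Thus a stable invitation exists iff $\phi$ is satisfiable, and since every stable invitation contains all $n$ variable agents, setting $k$ to any small positive value (say $k=1$) gives NP-hardness of the ``size $\geq k$'' decision problem. The main technical work is the case analysis for the clause gadget -- verifying that for \emph{every} nonempty pattern of literal memberships in $S$ among $\{y_{j,1},y_{j,2},y_{j,3}\}$ the induced choices among $\{c_j^k\}$ are simultaneously IR and envy-free -- together with the observation that distinct clause gadgets interact only through the shared variable agents and so cannot conflict with one another.

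For the individually rational claim I would reduce from \textsc{Maximum Independent Set} on graphs of maximum degree $3$, which is NP-hard. Since every subgraph $H$ of such a graph satisfies $|E(H)|\leq \tfrac{3}{2}|V(H)|\leq 2|V(H)|$, Hakimi's orientation theorem produces an orientation of $G$ in which every vertex has out-degree at most $2$. For each vertex $v$ introduce an agent $a_v$ with $F_{a_v}=\emptyset$, with $R_{a_v}$ equal to its out-neighbors under the orientation, and with simple preferences. Because each edge is oriented in exactly one direction -- placing one endpoint into the rejection set of the other -- an invitation $S$ is individually rational iff $S$ contains no oriented edge, iff $S$ is an independent set of $G$. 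Consequently the maximum individually rational invitation equals a maximum independent set of $G$, and NP-hardness of the ``size $\geq k$'' decision problem transfers immediately to this $(0,2)$-instance of \GSIP\ with simple preferences.
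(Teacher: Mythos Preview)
The paper omits its own proof of this theorem (deferring it to a long version), so there is no explicit argument to compare against; your proposal must be judged on its own merits. Both halves of your plan are sound, and the kernel/independent-set reformulation you lead with is exactly the right lens.

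For the stable-invitation half, your 3-SAT reduction is correct, but one detail in your summary is slightly off: it is not always the case that \emph{exactly one} of $c_j^1,c_j^2,c_j^3$ joins $S$. If all three literal agents $y_{j,1},y_{j,2},y_{j,3}$ lie in $S$, then each $c_j^k$ is forced out (by individual rationality) and each is already envy-free via its own $y_{j,k}$, so \emph{zero} clause agents are added. In the remaining nonempty patterns the choice is indeed forced to a single $c_j^k$ (the one whose literal agent is \emph{not} in $S$, cyclically). This does not harm the reduction, since what you actually need is that a consistent local choice exists for every satisfied clause and fails for every unsatisfied one; your case analysis delivers that. It is also worth stating explicitly that clause agents appear in no one else's rejection set, so the local choices never interact across gadgets.

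For the individually-rational half, your reduction from \textsc{Independent Set} in degree-$3$ graphs is correct and is in the spirit of the paper's citation of Garey--Johnson--Stockmeyer. The one point you should make fully explicit is that the out-degree-$\leq 2$ orientation must be \emph{computable in polynomial time}, not merely exist; Hakimi's theorem alone is an existence statement. A clean constructive route is to observe that a graph of maximum degree $3$ has arboricity at most $2$ (by Nash-Williams), decompose it into two forests in polynomial time, and orient each forest toward a chosen root so that every vertex has out-degree at most $1$ in each forest. This gives out-degree at most $2$ overall and keeps the reduction polynomial.
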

Note that NP-hardness of $(a, b)$-instances implies NP-hardness of $(\alpha,\beta)$-instances where $\alpha \geq a$ and $\beta \geq b$ and that NP-hardness for \GSIP\ with simple preferences immediately implies NP-hardness for \GSIP\ with arbitrary preferences.

We now consider the remaining cases of \GSIP\ whose computational complexity is not implied by Theorem~\ref{thm:hardness_gsip_1_1} and \ref{thm:hardness_gsip_0_2}. We know that $(0,0)$-instances of \GSIP\ with arbitrary preferences are solvable in polynomial time because those are instances of \ASIP. In addition if we are given $(1,0)$-instances or $(0,1)$-instances of \GSIP\ with arbitrary preferences, we can find a maximum stable invitation in polynomial time. 

\begin{Thm} \label{thm:gsip_1_0_polytime}
There exists a polynomial time algorithm that finds a maximum stable invitation when given a $(1, 0)$-instance $(N, P, F, R)$ of \GSIP\ (with arbitrary preferences, $P$). 
\end{Thm}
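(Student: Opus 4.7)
The plan is to generalize Algorithm~\ref{alg:stableScheduling} by iterating over candidate sizes $k = n, n-1, \ldots, 0$ and, for each $k$, deciding whether a stable invitation of size exactly $k$ exists; the largest feasible $k$ yields a maximum stable invitation. Fix $k$, let $A_k = \{a_i : k \succ_i 0\}$ and $B_k = \{a_j : (k+1) \succ_j 0\}$, and write $f(a_i) = a_j$ when $F_i = \{a_j\}$ and $f(a_i) = \bot$ when $F_i = \emptyset$. A stable invitation $S$ of size $k$ is precisely a subset $S \subseteq A_k$ satisfying $F$-closure ($a \in S$ and $f(a) \neq \bot$ imply $f(a) \in S$) together with envy-freeness (every $a_j \in B_k$ with $f(a_j) = \bot$ or $f(a_j) \in S$ must lie in $S$).

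I would encode these constraints as an implication digraph $I_k$ on $N$ with arcs $a \to f(a)$ (for $F$-closure) for each $a$ with $f(a) \neq \bot$ and arcs $f(a_j) \to a_j$ (for envy-freeness) for each $a_j \in B_k$ with $f(a_j) \neq \bot$; mark each $a_j \in B_k$ with $f(a_j) = \bot$ as \emph{forced}. A stable invitation of size $k$ then corresponds exactly to a union of SCCs of $I_k$ that is closed under the out-edges of the condensation $\hat{I}_k$, contains every SCC carrying a forced agent, lies within $A_k$, and has total agent-size $k$. The key structural lemma I would prove is that $\hat{I}_k$ is a rooted forest. Whenever an envy arc $f(a_j) \to a_j$ is present, so is the F-arc $a_j \to f(a_j)$, and together they form a 2-cycle placing their endpoints in the same SCC; consequently envy arcs never cross SCCs, and only F-arcs can become arcs of $\hat{I}_k$. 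Moreover, if two distinct agents $a, a' \in G$ both had $f(a), f(a') \notin G$, then mutual $I_k$-reachability between $a$ and $a'$ would have to leave $G$ via one of these two arcs and return, forcing the external intermediate vertex into $G$---a contradiction. Hence each SCC has at most one outgoing arc in $\hat{I}_k$, and together with acyclicity this makes $\hat{I}_k$ a forest whose roots correspond to sinks and cycles of the $F$-graph.

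Given this forest, the remaining problem is a constrained tree subset-sum: choose an ancestor-closed family of SCCs that contains the ancestor-closure $M$ of the forced SCCs, consists entirely of agents in $A_k$, and has total agent-size exactly $k$. I would first check that $M \subseteq A_k$ and $|M| \leq k$ (otherwise size $k$ is infeasible), then prune every SCC containing any agent outside $A_k$ along with all of its descendants in $\hat{I}_k$ (those descendants could not be added without pulling the forbidden ancestor back in), and finally run a standard tree DP that records, for each surviving super-tree, the set of totals attainable by ancestor-closed sub-collections; a polynomial subset-sum convolution across super-trees decides whether $k - |M|$ is attainable, and backtracking reconstructs an explicit invitation. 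The outer loop visits $O(n)$ sizes and each iteration does polynomial work, so the overall running time is polynomial. The main obstacle I anticipate is the structural lemma that $\hat{I}_k$ has out-degree at most one---carefully handling the mixed tree/cycle structure of the $F$-graph and the interplay between F-closure and envy-freeness arcs; once that is established, the DP and the overall control flow follow routine patterns.
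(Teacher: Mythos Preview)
Your approach is correct and matches the paper's outline: iterate over the target size $k$, exploit the functional (out-degree $\le 1$) structure of the $F$-graph, and run a DP on the resulting tree/cycle decomposition to test feasibility of size $k$. Your implication-digraph framing (adding envy arcs and passing to the condensation) is a clean and more explicit rendering of what the paper only sketches; it has the nice side effect of collapsing the cycle-plus-trees components into a pure forest before the DP.

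One small correction to your justification of the key structural lemma. Your argument that each SCC $G$ of $I_k$ has out-degree at most one says that mutual reachability between $a$ and $a'$ ``would have to leave $G$ \ldots\ and return''; but any walk between two vertices of an SCC stays inside that SCC, so this reasoning does not apply as written. The lemma is nonetheless true, and a clean proof is a counting argument: since every envy arc is the reverse of an $F$-arc, the underlying undirected edge set of $I_k$ restricted to $G$ coincides with that of the $F$-graph restricted to $G$. If $p$ vertices of $G$ have $f(\cdot)=\bot$ or $f(\cdot)\notin G$, then there are at most $|G|-p$ such undirected edges, whereas strong connectivity of $G$ forces at least $|G|-1$; hence $p\le 1$, and in particular at most one $F$-arc leaves $G$. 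With this fix the rest of your plan (forced set $M$, pruning by $A_k$, tree DP with subset-sum over the remaining forest) goes through unchanged.
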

\begin{proof}[Proof sketch]
Given a $(1,0)$-instance $(N, P, F, R)$ of \GSIP, we create a directed graph $G = (V, E)$ such that $V = N$ (each node corresponds to an agent) and $E = \{(a_i, a_j) : a_j \in F_i, \forall a_i \in N \}$. An edge $e = (a_i, a_j)$ has the direction from $a_i$ to $a_j$. Since each node contributes at most one edge, we know that each component of $G$ either is a tree or contains a cycle such that each node on the cycle is the root of a tree. Utilizing this structure of the graph, we can determine for fixed $k$ whether a stable invitation of size $k$ exists in polynomial time by using a dynamic programming algorithm. We omit details of the proof.
\end{proof}
\begin{Thm} \label{thm:gsip_0_1_polytime}
	There exists a polynomial time algorithm that finds a maximum stable invitation when given a $(0, 1)$-instance $(N, P, F, R)$ of \GSIP\ (with arbitrary preferences, $P$). 
\end{Thm}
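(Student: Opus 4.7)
The plan is to mirror the approach used for Theorem~\ref{thm:gsip_1_0_polytime}, exploiting the fact that a $(0,1)$-instance of \GSIP\ induces a functional graph with simple component structure. I build a directed graph $G = (V, E)$ with $V = N$ and $(a_i, a_j) \in E$ whenever $R_i = \{a_j\}$. Because each vertex has out-degree at most one, every weakly connected component is either a tree whose edges point toward a unique sink (the lone vertex with $R = \emptyset$) or a single directed cycle with trees rooted at its vertices, exactly analogously to the $(1,0)$ case.

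Next, I rewrite stability as a collection of purely local constraints. For a candidate size $k$, an invitation $S$ with $|S| = k$ is stable if and only if: (i) $k \succ_i 0$ for every $a_i \in S$; (ii) no edge $(a_i, a_j) \in E$ has both endpoints in $S$; and (iii) for every $a_i \notin S$ with $(k+1) \succ_i 0$, either $R_i = \{a_j\}$ with $a_j \in S$, or $R_i = \emptyset$. In the latter situation (iii) cannot be satisfied, so any such $a_i$ must in fact lie in $S$; hence every tree sink with $(k+1) \succ_i 0$ is forced into $S$, and every remaining constraint couples only an agent with its out-neighbor.

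I then iterate $k$ from $n$ down to $0$, and for each $k$ test feasibility by a dynamic program on $G$. For each component $C$ I compute $\Sigma(C) \subseteq \{0, 1, \dots, |C|\}$, the set of sizes $|S \cap C|$ can attain subject to all local constraints. On a tree rooted at its sink, a bottom-up DP maintaining at each vertex $v$ two tables of achievable subtree counts indexed by whether $v \in S$ suffices, with the sink's forced status imposed at the root. On a cyclic component, I sever one cycle edge $(u, w)$, enumerate the (at most four) locally consistent status pairs for $u$ and $w$, and run the tree DP on the resulting forest in each case; $\Sigma(C)$ is the union of the achievable counts across the cases. A standard subset-sum convolution across components then decides whether some $S$ of size exactly $k$ is feasible, with back-pointers recovering the witness; we return the largest successful $k$.

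The main obstacle is the cyclic case: naively enumerating subsets of a cycle of length $m$ would be exponential in $m$. The key observation is that cutting a single cycle edge and fixing the statuses of only its two endpoints already breaks the cycle into a forest, reducing the cyclic case to a constant number of tree-DP instances. Correctness hinges on verifying that the forced inclusions from condition (iii), combined with the edge constraints, are jointly necessary and sufficient for stability; once this is in place, polynomial running time follows directly from the polynomial size of the DP tables.
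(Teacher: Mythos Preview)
Your proposal is correct and follows essentially the same approach as the paper's own proof sketch: build the functional digraph on $N$ from the rejection sets, observe that each weakly connected component is a rooted tree or a cycle with attached in-trees, and for each target size $k$ run a dynamic program over these components (handling cycles by cutting one edge and fixing the endpoints), then combine components by subset-sum. Your write-up is in fact considerably more detailed than the paper's one-paragraph sketch; the only cosmetic issue is the phrasing of condition~(iii), which as literally written appears to allow $R_i=\emptyset$ before you immediately rule it out in the next sentence---you may want to state it directly as ``$R_i\cap S\neq\emptyset$'' to avoid the momentary ambiguity.
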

\begin{proof}[Proof sketch]
Given a $(0,1)$-instance $(N, P, F, R)$ of \GSIP, we create a directed graph $G = (V, E)$ analogous to the graph we constructed in the proof sketch of Theorem~\ref{thm:gsip_1_0_polytime}. We again use a dynamic programming algorithm to determine whether a stable invitation of size $k$ exists in polynomial time. We omit details of the proof.
\end{proof}
We emphasize that our algorithms for Theorems~\ref{thm:gsip_1_0_polytime} and \ref{thm:gsip_0_1_polytime} rely on the restriction that each agent's acceptance set (rejection set, respectively) is limited to singleton sets or empty sets.

Finally, we consider $(\alpha, 0)$-instances of \GSIP\ when $\alpha \geq 2$. 
In this sub-class of \GSIP, the decision problem is in P if we assume simple preferences while it is NP-hard if we assume arbitrary preferences. Recall that all of our results so far did not differ whether we assume simple preferences or arbitrary preferences.
\begin{Thm} \label{thm:hardness_gsip_2_0}
	It is NP-hard to decide whether a $(\alpha, 0)$-instance of \GSIP\ admits a stable invitation of size at least $k$ when agents can have arbitrary preferences and $\alpha \geq 2$.
\end{Thm}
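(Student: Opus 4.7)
The plan is to reduce from \textsc{Independent Set}. Given an instance $(G, k)$ with $G = (V, E)$ and target size $k \geq 1$, I would construct a $(2, 0)$-instance of \GSIP\ whose agents are one vertex agent $a_v$ for each $v \in V$ and one edge agent $b_e$ for each $e \in E$. I would set $F_{a_v} = \emptyset$, $F_{b_e} = \{a_u, a_v\}$ for $e = (u, v)$, and all rejection sets empty, giving $\alpha = 2$ and $\beta = 0$. The crucial use of arbitrary (non-simple) preferences is to give each $a_v$ a preference in which only the size $k$ strictly beats the outside option, and each $b_e$ a preference in which only the size $k+1$ strictly beats the outside option (all other sizes strictly below $0$, consistently with the paper's assumption that no size is indifferent to the outside option).

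For the forward direction, given an independent set $I$ of size $k$, I would show that $S = \{a_v : v \in I\}$ is stable. Individual rationality is immediate, since vertex agents have empty acceptance sets and $|S| = k$ is their unique acceptable size. Envy-freeness of each edge agent $b_e = (u, v)$ follows from independence of $I$: at most one of $u, v$ lies in $I$, hence $F_{b_e} \not\subseteq S$. Envy-freeness of each $a_v \notin S$ holds because joining would yield size $k+1$, which no vertex agent accepts.

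The backward direction is where the preference design does the real work. Suppose $S$ is stable with $|S| \geq k$. Individual rationality forces $|S| = k$ if $S$ contains any $a_v$ and $|S| = k+1$ if $S$ contains any $b_e$; these cannot simultaneously hold, so $S$ is homogeneous. A pure edge-agent set is infeasible because $F_{b_e} \subseteq S$ would force vertex agents into $S$. Hence $S$ consists of exactly $k$ vertex agents, so $|S| + 1 = k + 1$ coincides with each edge agent's only acceptable size; envy-freeness of every $b_e$ therefore demands $F_{b_e} = \{a_u, a_v\} \not\subseteq S$, which is exactly the statement that $\{v : a_v \in S\}$ is an independent set of size $k$ in $G$.

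The main subtlety I expect is the careful calibration of the two preferred sizes ($k$ for vertex agents and $k+1$ for edge agents). Picking the same preferred size for both types would let vertex-and-edge-mixed invitations arise or would trivially satisfy envy-freeness of some $b_e$ via size-incompatibility, collapsing the encoding; the staggered choice $K_{b_e} = \{|S| + 1\}$ is precisely what channels the envy-freeness obligation onto the $F$-containment condition, which is what captures graph independence. The extension to arbitrary $\alpha \geq 2$ is immediate, since every $(2, 0)$-instance is automatically an $(\alpha, 0)$-instance.
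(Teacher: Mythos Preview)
The paper omits the proof of this theorem (deferring it to a long version), so there is no explicit argument to compare against. Your reduction from \textsc{Independent Set} is correct and is a natural choice given the paper's citation of Garey--Johnson--Stockmeyer. Each step checks out against Definitions~\ref{def:GSIP} and~\ref{def:invitation_gsip}: in the forward direction, envy-freeness of each $b_e$ indeed reduces to $F_{b_e}\not\subseteq S\cup\{b_e\}$ (since $|S|+1=k+1$ is the one size $b_e$ accepts), which independence of $I$ guarantees; in the backward direction, your homogeneity argument is sound, and once $S$ consists of exactly $k$ vertex agents, envy-freeness of every $b_e$ is forced onto the containment condition precisely because $k+1$ is $b_e$'s acceptable size. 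The staggered choice of acceptable sizes is, as you note, the crux that makes the encoding work and that genuinely exploits ``arbitrary'' (non-simple) preferences, matching the contrast the paper draws with Lemma~\ref{lemma:trivial_poly_gsip}. Your final remark about $(2,0)$-instances being $(\alpha,0)$-instances for $\alpha\ge 2$ is exactly the convention the paper adopts immediately after Theorem~\ref{thm:hardness_gsip_0_2}.
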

\begin{Lemma} \label{lemma:trivial_poly_gsip}
Given any $(\alpha,0)$-instance $(N, F, R)$ of \GSIP\ with simple preferences, the full invitation is the unique maximum stable invitation.
\end{Lemma}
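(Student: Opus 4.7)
The plan is to verify directly that the full invitation $N$ satisfies the three clauses of individual rationality from Definition~\ref{def:invitation_gsip} and that envy-freeness holds vacuously; uniqueness of the maximum then follows from size considerations alone.

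First I would check individual rationality for $S = N$. For each agent $a_i$, the acceptance-set condition $F_i \subseteq N$ holds trivially since $F_i \subseteq N \setminus \{a_i\} \subseteq N$. The rejection-set condition $R_i \cap N = \emptyset$ holds because we are in a $(\alpha, 0)$-instance, so $R_i = \emptyset$ by definition. Finally, the size condition $|N| = n \succ_i 0$ holds because agents have simple preferences, which by Definition~\ref{def:simple_pref_gsip} means $x \succ_i 0$ for every nonzero $x \in X$, and in particular for $x = n$.

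Next I would note that envy-freeness holds vacuously: the condition in Definition~\ref{def:invitation_gsip} quantifies over agents $a_i \not\in S$, and with $S = N$ there are no such agents. Hence $N$ is a stable invitation. Since $|N| = n$ is the largest cardinality any invitation can have, $N$ is a maximum stable invitation, and it is the \emph{unique} invitation of size $n$, giving uniqueness of the maximum among stable invitations.

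There is no real obstacle here — the lemma is essentially a direct consequence of two structural features working together: empty rejection sets remove the only external constraint that $N$ could violate at the individual-rationality stage, and simple preferences guarantee that every agent strictly prefers being in the group of $n$ to her outside option. The usefulness of the lemma is more conceptual than technical: combined with Theorem~\ref{thm:hardness_gsip_2_0}, it highlights that in the $(\alpha, 0)$ regime the boundary between tractable and intractable is drawn exactly by whether agents' size-preferences are simple or arbitrary.
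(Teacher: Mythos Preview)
Your proposal is correct and follows essentially the same approach as the paper's proof: verify that the full invitation is individually rational (using $F_i \subseteq N$, $R_i = \emptyset$ from $\beta = 0$, and the size condition from simple preferences), observe that envy-freeness is vacuous, and conclude uniqueness of the maximum by cardinality. If anything, your version is slightly more explicit than the paper's, since you spell out the size condition $n \succ_i 0$ that the paper leaves implicit.
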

\begin{proof}
Let $S = N$ be the full invitation given any $(\alpha,0)$-instance $(N, F, R)$ of \GSIP\ with simple preferences.  Because $S$ is the full invitation, we know that $S$ is envy-free by definition. 
For all $a_i \in S$, we know that $F_i\subseteq S=N$ and $R_i = \emptyset$ as $\beta = 0$, which implies that $S$ is individually rational. Therefore $S$ is the unique maximum stable invitation. 
\end{proof}

While Lemma~\ref{lemma:trivial_poly_gsip} is trivial, it is worth noting that the interests of the agents and organizer align in this case, which enables us to efficiently find a maximum stable invitation.

We summarize our hardness and easiness results for the non-strategic case of \GSIP\ in Table~\ref{tbl:gsip_summary}. P denotes the existence of polynomial time algorithms and NP-C (NP-completeness) denotes the hardness result. The entries in boldface remark the results we presented in this work, while the other entries are directly implied by our results.\footnote{We note that the single entry (P$^*$) can be derived from the result of Darmann et al.~\cite{GASP2012WINE}.}

We also emphasize that our NP-hardness results hold for both finding maximum stable invitations and finding maximum individually rational invitations (which may not be envy-free). Finding the maximum envy-free invitation is trivial because the full invitation is always envy-free. 

\begin{table}[!ht]
	\centering
\begin{tabular}{|l|*{6}{c|}}\hline
\multirow{2}*{} & \multicolumn{3}{c|}{Simple Preferences} & \multicolumn{3}{c|}{Arbitrary Preferences} \\ \cline{2-7}
&\makebox[3em]{$\beta=0$}&\makebox[3em]{$\beta=1$}&\makebox[3em]{$\beta\geq 2$} 
&\makebox[3em]{$\beta=0$}&\makebox[3em]{$\beta=1$}&\makebox[3em]{$\beta\geq 2$} \\ \hline
$\alpha = 0$ & \textbf{P} & P & \textbf{NP-C} & \textbf{P$^*$} & \textbf{P} & NP-C\\ \hline
$\alpha = 1$ & \textbf{P} & \textbf{NP-C} & NP-C & \textbf{P} & NP-C & NP-C\\ \hline
$\alpha \geq 2$ & \textbf{P} & NP-C & NP-C & \textbf{NP-C} & NP-C & NP-C\\ \hline
\end{tabular}
\caption{Computational complexity of finding maximum stable invitations and finding maximum individually rational invitations given an instance of \GSIP. }
\label{tbl:gsip_summary}
\end{table}


\subsection{The Strategic Case} \label{sec:gsip_strategic}
Recall that our impossibility result for \ASIP\ in the strategic case immediately implies the same impossibility result for \GSIP\ (with arbitrary preferences) because \ASIP\ is a special case of \GSIP\ by restricting the acceptance set and rejection sets to empty sets. 
However we obtain an impossibility result for another special case of \GSIP\ in which we assume that all agents have simple preferences, as stated in Lemma~\ref{lemma:gsip_impossible}. This strengthens the hardness results in the strategic case of \GSIP. Although we do not formally define a mechanism and strategy-proofness in the context of \GSIP, the reader should assume definitions analogous to the one provided in Section~\ref{sec:def_notation_asip_mechanism}.

\begin{Lemma} \label{lemma:gsip_impossible}
There is no strategy-proof mechanism that finds a stable invitation when given an instance of \GSIP, even if we assume that all agents have simple preferences.
\end{Lemma}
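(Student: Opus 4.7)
The plan is to adapt the construction used in the proof of Theorem~\ref{thm:impossibility} to the GSIP setting. That proof exploited an agent's size preference (wanting to be alone) to trigger a manipulation, but under simple preferences every positive size beats the outside option, so a size-based conflict is unavailable. I would replace it with a rejection-based conflict. Concretely, consider the two-agent instance $N = \{a_1, a_2\}$ with arbitrary simple preferences $P_1, P_2$, empty acceptance sets $F_1 = F_2 = \emptyset$, and mutually rejecting sets $R_1 = \{a_2\}$, $R_2 = \{a_1\}$.

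First I would enumerate the four possible invitations and verify that exactly $\{a_1\}$ and $\{a_2\}$ are stable under $(P_1, P_2)$: the empty invitation fails envy-freeness because each agent's constraints are vacuously met when joining alone and simplicity gives $1 \succ_i 0$; the full invitation fails individual rationality because each agent rejects the other; and each singleton $\{a_i\}$ is individually rational (the lone agent's acceptance set is empty and her rejection set is not triggered) and envy-free (the excluded agent cannot join without violating the insider's rejection set). Then, assuming for contradiction that $(V, Z)$ is strategy-proof and always returns a stable invitation, by symmetry I may take $Z(P_1, P_2) = \{a_1\}$. Consider the deviation $v_2$ in which $a_2$ keeps her preference but reports $R_2' = \emptyset$ (which is an admissible action under the generalized mechanism definition referenced right above the lemma). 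Re-running the stability check under $(P_1, v_2)$: the empty and full invitations remain non-stable for the same reasons; $\{a_1\}$ loses envy-freeness because nothing now prevents $a_2$ from joining; and $\{a_2\}$ is still stable because $a_1$'s own rejection set still blocks her from joining. Hence the mechanism must output $Z(P_1, v_2) = \{a_2\}$, and under $a_2$'s true preferences $\{a_2\} \succ_2 \{a_1\}$ (she is uninvited in the latter, so $\{a_1\} \sim_2 0$, while simplicity gives $|\{a_2\}| \succ_2 0$), contradicting strategy-proofness.

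The only delicate point is ensuring that the rejection-set manipulation lies in the action space $V_2$; this follows directly from the fact that in GSIP an agent's report includes her rejection set, so declaring an empty rejection set is as legitimate a report as any other. The rest is a routine symmetric case check (repeating the argument on $a_1$ if the mechanism had returned $\{a_2\}$ instead) paralleling the proof of Theorem~\ref{thm:impossibility}, and the construction uses only simple preferences, which strengthens the impossibility over the direct inheritance from the ASIP result.
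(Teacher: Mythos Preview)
Your proposal is correct and follows essentially the same construction as the paper: the two-agent mutually-rejecting $(0,1)$-instance, the observation that only the two singletons are stable, and the deviation of the excluded agent to an empty rejection set to force the mechanism to pick her singleton. One small slip: the reason each singleton $\{a_i\}$ is envy-free is that the \emph{excluded} agent's own rejection set is triggered in $\{a_1,a_2\}$ (envy-freeness in Definition~\ref{def:invitation_gsip} checks only the outsider's constraints), not the insider's; in this symmetric instance both happen to be violated, so your conclusion is unaffected.
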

\begin{proof} Consider a $(0,1)$-instance of \GSIP\ with two agents $N = \{a_1, a_2\}$.
We assume that these agents have simple preferences and their rejection sets are given by:
\begin{table*}[!h]
	\centering
\begin{tabular}{cc}
	$R_1 = \{a_2\}$ ,  & $R_2 = \{a_1\}$.
\end{tabular}
\end{table*}

Given $(R_1, R_2)$, the only two stable invitations are $S_1 = \{a_1\}$ and $S_2 = \{a_2\}$. Suppose that a mechanism chooses $S_2$ given $(R_1, R_2)$. Then $a_1$ has an incentive to misreport as if her rejection set were empty (i.e.\ $\hat{R}_1 = \emptyset$).

Given $(\hat{R}_1, R_2)$, the only stable invitation is now $S_1$ (and $S_2$ is no longer envy-free with respect to $a_1$). Since $S_1 \succ_1 0 \sim_1 S_2$, agent $a_1$ has an incentive to misreport her rejection set.
Similarly, if the mechanism were to choose $S_1$ given $(R_1, R_2)$, then $a_2$ would have an incentive to misreport with $\hat{R}_2 = \emptyset$. Therefore there is no strategy-proof mechanism that is capable of finding a stable invitation. 
\end{proof}

In summary we have shown that the General Stable Invitation Problem (\GSIP) is both computationally and game-theoretically hard to solve, even if we assume simple preferences of agents (which is a sub-class of \GSIP). It is worth noting that the only positive result we obtained for the strategic case is when all agents have anonymous \INC-preferences (i.e.\ when given an \INC-instance of \ASIP). Yet we believe that there are many other interesting ways to tackle the problem despite our negative result, as we discuss in the next section.

\section{Contributions and Future Work} \label{sec:Discussion}

The main contribution of this work is a thorough analysis of the Stable Invitation Problem from both the computational complexity perspective and game-theoretic perspective. 
We first defined the Anonymous Stable Invitation Problem (\ASIP) in which agents have anonymous preferences over sizes of invitations. 
We showed that the problem of finding a maximum stable invitation is solvable in polynomial time when agents are truthful, and showed that it is in general impossible to design a strategy-proof mechanism when agents are strategic. Yet we also discovered that when the interests of the agents and organizer align, the strategic case of \ASIP\ is solvable, by presenting a strategy-proof mechanism. 

We then relaxed the assumption of anonymous preferences and introduced acceptance sets and rejections sets by which agents can specify their constraints. We formally defined the General Stable Invitation Problem (\GSIP), and showed that in general it is NP-hard to find a maximum stable invitation even if we assume truthful agents with simple preferences.  We also showed that the problem is still NP-hard when we only seek individually rational invitations (that may not be envy-free). Yet we presented efficient algorithms that find a maximum stable invitation for some special instances of \GSIP\ with arbitrary preferences. In the strategic case, an impossibility result for \ASIP\ immediately implies the same negative result for \GSIP.  However, we also obtained an impossibility result for \GSIP\ with simple preferences in which agents only care about the identities of attendees but not the size of an invitation.

While this work answers many interesting questions with flavors of complexity and game-theory, it also leaves many interesting open problems. In the Anonymous Stable Invitation Problem, we argued that our easiness results can be extended to the case with multiple alternatives for the time of the event. The underlying assumption we made is that the organizer needs to pick a specific alternative for the time. However one can think of a different setting where the same event can be held multiple times (for example, multiple receptions during a long conference), and our results are not immediately applicable to this setting. This generalization leads to stable matching problems, which is an interesting direction for future work.  In addition, our impossibility result is obtained partly due to the conflict between strategy-proofness and capability of finding a solution; one can instead seek a strategy-proof mechanism that is guaranteed to find an approximate solution, which is another interesting direction.

In the General Stable Invitation Problem, we showed that the decision problem is computationally hard to solve, but there has been much research devoted to developing approximation algorithms. Approximation algorithms are interesting of their own, but we are more interested in designing efficient mechanisms that are strategy-proof and are guaranteed to find an approximate solution (with respect to the maximum stable solution). An interesting question is to ask whether it is possible to design such mechanisms, and how good the approximation factors can be.
Another approach is to relax the stability requirements and let both individual rationality and envy-freeness can be violated to a small extent, while we seek a maximum solution. This leads to an optimization problem with soft-constraints (instead of hard-constraints), and such problems are often solvable in polynomial time. We can further investigate how one can design strategy-proof mechanisms in such settings, and this is yet another direction for future work. 
Finally we can generalize the Stable Invitation Problem even further to the weighted version, in which the organizer has a weight vector over agents such that the organizer tries to maximize the sum of the weights of attendees instead of the number of attendees.

\subsection*{Acknowledgments}
This work was funded in part by the National Science Foundation (under grants IIS-0963478 and IIS-1347214), the U.S. Army (under grant W911NF1010250), and the Kwanjeong Educational Foundation.



%
\begin{contact}
Hooyeon Lee\\
Computer Science Department\\
Stanford University\\
Stanford, USA\\
\email{haden.lee@stanford.edu}
\end{contact}

\begin{contact}
Yoav Shoham\\
Computer Science Department\\
Stanford University\\
Stanford, USA\\
\email{shoham@cs.stanford.edu}
\end{contact}



\end{document}